\tikzset{node/.style={circle,fill=gray!10,draw,minimum size=0.6cm,inner sep=0pt} }
\tikzset{arc/.style = {->,> = latex, thick, } }
\renewcommand{\P}[1]{\mathbb{P}\left(#1\right)}
\newcommand{\E}[1]{\mathbb{E}\left[#1\right]}
\renewcommand{\setminus}{-}
\newcommand{\algname}{\textsc{Dispatch}}
\newcommand{\tpp}{\textsc{TPP}}
\title{DISPATCH: An Optimally-Competitive Algorithm for Maximum Online Perfect Bipartite Matching with i.i.d.~Arrivals}
\titlerunning{DISPATCH: Online Perfect Bipartite Matching with i.i.d.~Arrivals}
\author{Minjun Chang\inst{1}\orcidID{0000-0002-2251-4163}
\and Dorit S. Hochbaum\inst{1}\orcidID{0000-0002-2498-0512}
\and \\Quico Spaen\inst{1}\orcidID{0000-0003-2788-1904}
\and Mark Velednitsky\inst{1}\orcidID{0000-0003-1176-5159}}
\institute{University of California, Berkeley\\ \email{\{minjun.lynn,dhochbaum,qspaen,marvel\}@berkeley.edu}}
\begin{document}

\maketitle

\begin{abstract}
This work presents an optimally-competitive algorithm for the problem of maximum weighted online perfect bipartite matching with i.i.d.~arrivals. In this problem, we are given a known set of workers, a distribution over job types, and non-negative utility weights for each pair of worker and job types. At each time step, a job is drawn i.i.d.~from the distribution over job types. Upon arrival, the job must be irrevocably assigned to a worker and cannot be dropped. The goal is to maximize the expected sum of utilities after all jobs are assigned.

We introduce \algname{}, a 0.5-competitive, randomized algorithm. We also prove that 0.5-competitive is the best possible. \algname{} first selects a ``preferred worker'' and assigns the job to this worker if it is available. The preferred worker is determined based on an optimal solution to a fractional transportation problem. If the preferred worker is not available, \algname{} randomly selects a worker from the available workers. We show that \algname{} maintains a uniform distribution over the workers even when the distribution over the job types is non-uniform.

\keywords{Perfect matching, i.i.d.~arrivals, competitive ratio}

\end{abstract}

\section{Introduction}	
We consider the problem of \emph{maximum online perfect bipartite matching}. Suppose that we have a set of jobs and a set of workers. At every time step, a single job arrives to be served by one of the workers. Upon a job's arrival, we observe the utility of assigning the job to each of the workers. We must immediately decide which worker will serve the job. Once a worker is assigned a job, it is busy and cannot be assigned to another job. Jobs continue to arrive until all workers are busy.

In the natural bipartite graph that arises, there is an edge between each worker and job with a non-negative utility of assigning that worker to that job. The assignment of workers to jobs will form a perfect matching in this bipartite graph. Our goal is to design a dispatching algorithm that maximizes the expected sum of utilities of the perfect matching. 

In this work, we consider the maximum online perfect bipartite matching problem with \emph{independent and identically distributed (i.i.d.) arrivals}. This means that, at each time step, a job is drawn i.i.d.~from a known distribution over job types.

Examples of online bipartite matching include matching doctors to patients in hospitals, matching operators to callers in call centers, matching drivers to passengers in ride-sharing, and matching impressions to customers in online ad auctions~\cite{Meh13}.

We introduce the randomized algorithm \algname{} for the problem of online weighted perfect bipartite matching with i.i.d.~arrivals. \algname{} is 0.5-competitive algorithm: the total expected utility of the perfect matching produced by \algname{} is at least half of the total expected utility of an optimal algorithm that knows the job arrival sequence in advance. We also describe a family of problem instances for which 0.5 is the best-possible competitive ratio. The \algname{} algorithm, thus, achieves the best-possible competitive ratio. In contrast, the same problem with adversarial job arrivals cannot be bounded, as observed by Feldman et al.~\cite{feldman2009wine}.

To assign workers to jobs, \algname{} first selects a \emph{preferred worker}. This preferred worker is determined based on an optimal solution to a fractional transportation problem. If the preferred worker is available, then job is assigned to this worker. Otherwise, \algname{} randomly selects a worker from the available workers.


\subsection{Related Work}
Our work resides in the space of online matching problems, including the Maximum (Imperfect) Bipartite Matching problem and the Minimum (Perfect) Bipartite Matching problem. Another closely related problem is the $k$-Server problem. For each of these problems, several arrival models are considered. Arrival models including adversarial, where the adversary chooses jobs and their arrival order; random order, where the adversary chooses jobs but not their arrival order; and i.i.d., where the adversary specifies a probability distribution over job types and each arrival is sampled independently from the distribution. We briefly describe each of these problems and present best-known results, contrasting it to the setting considered here. A summary is in Table \ref{tab:literature}.


\subsubsection{Maximum Online (Imperfect) Bipartite Matching}
The maximum online (imperfect) bipartite matching problem is defined on a bipartite graph with $n$ known workers and $n$ jobs that arrive one at a time. Jobs either get assigned to a worker or are discarded. The goal is to maximize the cardinality (or sum of weights) of the resulting matching. In contrast to our problem, jobs may be the discarded and the resulting matching may be \emph{imperfect}.

For the unweighted problem with adversarial arrivals, Karp, Vazirani, and Vazirani~\cite{KarVazVaz90} showed a best-possible algorithm that achieves a competitive ratio of $1-\frac{1}{e} \approx 0.632$. Variations of the problem have been proposed: addition of edge or vertex weights, the use of budgets, different arrival models, etc. Mehta~\cite{Meh13} provides an excellent overview of this literature. When the arrivals are in a random order, it is possible to do better than $1-\frac{1}{e}$. Mahdian and Yan~\cite{mahdian2011online}, in $2011$, achieved a competitive ratio of $0.696$. Manshadi et al.~\cite{ManGhaSab12} showed that you cannot do better than $0.823$. If the problem also has weights, then the best-possible competitive ratio is $0.368$ by a reduction from the secretary problem as shown by Kesselheim et al.~\cite{kesselheim2013optimal}. They also give an algorithm that attains this competitive ratio. 

The problem has also been studied when the jobs are drawn i.i.d.~from a known distribution. This problem is also referred to as \emph{Online Stochastic Matching}. The first result to break the $1 - \frac{1}{e}$ barrier for the unweighted case was the $0.67$-competitive algorithm of Feldman et. al.~\cite{feldman2009online} in $2009$. To date, the best-known competitive ratio of $0.730$ is due to Brubach et al~\cite{BruSanSri16}. This is close the best-known bound of 0.745 by Correa et al.~\cite{correa2017posted}.

\subsubsection{Online Minimum (Perfect) Bipartite Matching}
The online minimum (perfect) bipartite matching addresses the question of finding a minimum cost perfect matching on a bipartite graph with $n$ workers and $n$ jobs. Given any arbitrary sequence of jobs arriving one by one, each job needs to be irrevocably assigned to worker on arrival. This problem is the minimization version of the problem considered in this work. However, the obtained competitive ratios do not transfer.

The problem was first considered by Khuller, Mitchell, and Vazirani~\cite{KhuMitVaz94} and independently by Kalyanasundaram and Pruhs~\cite{KalPru93}. If the weights are arbitrary, then the competitive ratio cannot be bounded. To address this, both papers considered the restriction where the edge weights are distances in some metric on the set of vertices. They give a $2n-1$ competitive algorithm, which is the best-possible for deterministic algorithms. When randomized algorithms are allowed, the best-known competitive ratio is $O(\log^2(n))$ by Bansal et al.~\cite{BanBucGup07}. If the arrival order is also randomized, then Raghvendra~\cite{raghvendra2016robust} shows that $2\log{(n)}$ is attainable. He also shows that this is the best possible.


\subsubsection{$k$-Server Problem}
In the $k$-server problem, $k$ workers are distributed at initial positions in a metric space. Jobs are elements of the same metric space and arrive one at a time. When a job arrives, it must be assigned to a worker which moves to the job's location. The goal in the $k$-server problem is to minimize the total distance traveled by all workers to serve the sequence of jobs. After an assignment, the worker remains available for assignment to new jobs. This \emph{reassignment} distinguishes the $k$-server problem from ours, where workers are fixed to a job once assigned.
	
The $k$-server problem was introduced by Manasse, McGeoch, and Sleater~\cite{ManMcgSle90}. A review of the $k$-server problem literature was written by Koutsoupias~\cite{Kou09}. For randomized algorithms in discrete metrics, the competitive ratio $O(\log^2{(k)} \log{(n)})$ was attained by Bubeck et. al.~\cite{bubeck2017server}, where $n$ is the number of points in the discrete metric space. On the other hand, $\Omega(\log{(k)})$ is a known lower bound. In the i.i.d.~setting, Dehghani et. al. \cite{dehghani2017stochastic} consider a different kind of competitive ratio: they give an online algorithm with a cost no worse than $O(\log{(n)})$ times the cost of the optimal \emph{online} algorithm.

\begin{table}[t]
	\centering
	\caption{Best-known competitive ratios and impossibility bounds for various online bipartite matching problems. $\bigstar$: Results presented in this paper.}
	\label{tab:literature}
	
	\begin{tabular}{cccccc}
		\toprule
		\textbf{Sense} & \textbf{Matching} & \textbf{Arrivals} & \textbf{Restrictions} & \textbf{Best Known} & \textbf{Best Possible}\\
		\cmidrule(r){1-4} \cmidrule(l){5-6} 
		Max & Imperfect & Advers. & 0/1 & $0.632$~\cite{KarVazVaz90} & $0.632$~\cite{KarVazVaz90}\\
		Max & Imperfect & Rand. Ord. & 0/1 & $0.696$~\cite{mahdian2011online} & $0.823$~\cite{ManGhaSab12}\\
		Max & Imperfect & Rand. Ord. & None & $0.368$~\cite{kesselheim2013optimal} & $0.368$~\cite{kesselheim2013optimal}\\
		Max & Imperfect & i.i.d. & None & $0.730$~\cite{BruSanSri16} & $0.745$~\cite{correa2017posted}\\
		Min & Perfect & Advers. & Metric & $O(\log^2(n))$~\cite{BanBucGup07} & $\Omega(\log(n))$~\cite{MeyNanPop06} \\
        Min & Perfect & Rand. Ord. & Metric & $2 \log{(n)}$~\cite{raghvendra2016robust} & $2 \log{(n)}$~\cite{raghvendra2016robust}\\
        Max & Perfect & Adversarial & None & - & 0~\cite{feldman2009wine} \\
		\cmidrule(r){1-4} \cmidrule(l){5-6}
		\textbf{Max} & \textbf{Perfect} & \textbf{i.i.d.} & \textbf{None} & $\frac{1}{2}^\bigstar$ & $\frac{1}{2}^\bigstar$\\
		\bottomrule
	\end{tabular}
\end{table}

\subsection{Structure of this Work}
This paper is organized as follows. Section \ref{sec:prelims} formally introduces the problem of online perfect bipartite matching with i.i.d.~arrivals and defines the concept of competitive ratio. Section \ref{sec:algorithm} describes \algname{}, presents an example to demonstrate the algorithm, and provides the proof that \algname{} is 0.5-competitive. Section \ref{sec:bound} introduces a family of instances of the online perfect bipartite matching problem for which no online algorithm performs better than $\frac{1}{2}$ in terms of competitive ratio. Finally, section \ref{sec:conclusion} summarizes the results and suggests directions for future research.

\section{Preliminaries}\label{sec:prelims}
The set of workers is denoted by $W$ with size $n = |W|$. The set $J$ denotes the set of job types with size $k = |J|$. For every worker $w \in W$ and job type $j \in J$ there is a utility of $u_{wj} \ge 0$ for assigning a job of type $j$ to worker $w$. Let  $\mathcal{D}(J)$ be a known probability distribution over the job types.

At every time step $t=1, \dots, n$, a single job is drawn i.i.d.~from $J$ according to $\mathcal{D}$. The job must be irrevocably assigned to a worker before the next job arrives. Workers are no longer available after they have been assigned a job. Let $r_j$ denote the expected number of jobs of type $j$ that arrive. After $n$ steps, each worker is assigned to one job and the resulting assignment forms a perfect matching. Our goal is to design a procedure such that the expected sum of the utilities of the resulting perfect matching is as high as possible.

Throughout this work, we will repeatedly use two bipartite graphs; the \emph{expectation graph} $G$ and the \emph{realization graph} $\widehat{G}$. The expectation graph $G = (W, J, E)$ is a complete bipartite graph defined over the set of workers $W$ and the set of job types $J$. An edge $[w,j] \in E$ has associated utility $u_{wj} \ge 0$, for $w \in W$ and $j \in J$. The realization graph $\widehat{G} = (W, \widehat{J}, \widehat{E})$ is the random bipartite graph obtained after all $n$ jobs have arrived. $\widehat{J}$ denotes the set of $n$ jobs that arrived. We use $\hat{j}_t \in \widehat{J}$ to denote the job that arrives at time $t$ and $j_t \in J$ to denote its job type. The edge set $\widehat{E}$ consists of all worker-job pairs, such that $\widehat{G}$ is a complete bipartite graph defined over $W$ and $\widehat{J}$. Every edge $[w, \hat{j}] \in \widehat{E}$ has utility $u_{wj}$, where $j$ is the job type of job $\hat{j}$. It is important to remember that the expectation graph $G$ is deterministic and known in advance whereas the realization graph $\widehat{G}$ is a random graph representing a realization of the job arrival process and is revealed over time.


An instance of the online perfect bipartite matching problem with i.i.d.~arrivals is defined by the set of workers $W$, the job types $J$, non-negative utilities $u_{wj}$, and a distribution over the job types $\mathcal{D}(J)$. Equivalently, the expectation graph $G$ and the distribution $\mathcal{D}(J)$ defines an instance of this problem. Here we analyze the family of potentially randomized algorithms that return a perfect matching $\hat{M}$ on $\widehat{G}$. The performance of an algorithm $ALG$ for a single realization $\widehat{G}$ is given by:
\begin{equation*}
	ALG(\widehat{G}) = \E{\sum_{[w,j] \in E} u_{wj} I_{wj}},
\end{equation*}
where $I_{wj}$ is a random indicator variable that equals $1$ if $ALG$ assigned a job of type $j$ to worker $w$ and equals $0$ otherwise. For a given problem instance defined by expectation graph $G$ and distribution $\mathcal{D}(J)$, $\E{ ALG(\widehat{G}) }$ measures the algorithm's expected performance over samples of $\widehat{G}$ from $G$ according to $\mathcal{D}(J)$.

The worst-case performance across instances is measured by the \emph{competitive ratio}. Let $OPT(\widehat{G})$ be the maximum weight perfect matching in the realization graph $\widehat{G}$ and let $\E{ OPT(\widehat{G}) }$ be its expectation across different realizations for a given expectation graph $G$ and distribution $\mathcal{D}(J)$. $\E{ OPT(\widehat{G}) }$ measures the performance of an optimal algorithm that has full information about the arrival sequence. This is known as an adaptive online adversary. The ratio 
$\frac{\E{ ALG(\widehat{G}) }}{\E{ OPT(\widehat{G}) }}$
measures the performance of $ALG$ relative to the optimal algorithm for a given instance of the problem. The competitive ratio is the worst-case, i.e. lowest, ratio among all possible instances of the expectation graph $G$ and distributions $\mathcal{D}(J)$:

\begin{definition}[Competitive Ratio]
	An algorithm $ALG$ is said to have a competitive ratio of $\alpha$ when for all instances of the expectation graph $G$ and distribution $\mathcal{D}(J)$:
	\begin{equation*}
		\alpha \le \frac{\E{ ALG(\widehat{G}) }}{\E{ OPT(\widehat{G}) }}.
	\end{equation*}
\end{definition}

\subsection{Bounding the Performance of OPT} \label{sec:opt_bound}
It is difficult to compute $\E{ OPT(\widehat{G}) }$ directly. We show that the randomness in $\widehat{G}$ reduces the expected value of the optimal perfect matching compared to the value of the optimal transportation problem where the number of jobs of each type is equal to its expectation. This offline transportation problem is then used to guide the online assignment.

A similar approach was used in the context of unweighted online imperfect bipartite matching by Feldman et al.~\cite{feldman2009online} and Haepler et al.~\cite{HaeMirZad11}. Here, we use a transportation problem instead of a maximum weight matching. We also bound the performance of OPT differently.

Recall that, in expectation, $r_j$ jobs of job type $j \in J$ will arrive in $\hat{G}$. An optimal fractional matching of these jobs is obtained by solving a fractional transportation problem on the expectation graph $G$, where each job type has a demand of $r_j$ and each worker has a supply of $1$ and the sum of utilities is maximized.

Formally, let $f_{wj} \geq 0$ be the flow from worker $w \in W$ to job type $j \in J$. This can be interpreted as a fractional assignment of worker $w$ to jobs of job type $j$. We define the transportation problem $TPP$:
\begin{alignat*}{3}
	TPP(G) = \max_{f_{wj} \ge 0} && \quad \sum_{w \in W} \sum_{j \in J} u_{wj} f_{wj}, & \\
	&& \sum_{w \in W} f_{wj} & = r_j \quad \forall j \in J, \\
	&& \sum_{j \in J} f_{wj} & = 1  \quad \forall w \in W.
\end{alignat*}
Let $f^*_{wj}$ be an optimal flow on edge $[w, j] \in E$.

We claim that $\E{ OPT(\widehat{G}) } \le TPP(G)$. The reason is that the weighted average of perfect matchings $OPT(\widehat{G})$ forms a feasible solution to the transportation problem above.

\begin{lemma}\label{lem:opt_better_expected}
	Given any expectation graph $G$ and distribution over job types $\mathcal{D}(J)$,
	\begin{equation*}
		\E{ OPT(\widehat{G}) } \le TPP(G).
	\end{equation*}
\end{lemma}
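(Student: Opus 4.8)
The plan is to exhibit, for the optimal offline matching on each realization, an aggregate fractional assignment whose expectation over realizations is feasible for $TPP(G)$ and whose objective value equals exactly $\E{OPT(\widehat{G})}$; optimality of $TPP(G)$ then delivers the inequality. The whole argument rests on linearity of expectation together with the convexity of the transportation polytope, so I expect it to be short once the right aggregate variables are introduced.

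First I would fix a realization $\widehat{G}$ together with an optimal perfect matching on it. For every worker $w \in W$ and job type $j \in J$, let $x_{wj}(\widehat{G})$ count the number of jobs of type $j$ that this matching assigns to worker $w$. Because the matching is perfect, each worker receives exactly one job, so $\sum_{j \in J} x_{wj}(\widehat{G}) = 1$ for every $w$, and every arriving job is matched, so the number of type-$j$ arrivals is $N_j(\widehat{G}) := \sum_{w \in W} x_{wj}(\widehat{G})$. By construction, $OPT(\widehat{G}) = \sum_{w \in W}\sum_{j \in J} u_{wj}\, x_{wj}(\widehat{G})$.

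Next I would average over realizations by setting $f_{wj} := \E{x_{wj}(\widehat{G})} \ge 0$. Linearity of expectation gives $\E{OPT(\widehat{G})} = \sum_{w,j} u_{wj} f_{wj}$, so it remains only to check that $f$ is feasible for $TPP(G)$. The supply constraints transfer pathwise: taking expectations in $\sum_{j} x_{wj}(\widehat{G}) = 1$ yields $\sum_{j} f_{wj} = 1$ for every $w$. For the demand constraints, taking expectations in $\sum_{w} x_{wj}(\widehat{G}) = N_j(\widehat{G})$ yields $\sum_{w} f_{wj} = \E{N_j(\widehat{G})} = r_j$, using that $r_j$ is defined as the expected number of type-$j$ arrivals. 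Hence $f$ is a feasible flow, and since $TPP(G)$ is the maximum over all feasible flows, $\E{OPT(\widehat{G})} = \sum_{w,j} u_{wj} f_{wj} \le TPP(G)$.

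The step needing the most care — and the reason the ``weighted average'' phrasing in the statement matters — is the demand constraint. A single realization does not satisfy it, since there $\sum_w x_{wj}(\widehat{G}) = N_j(\widehat{G})$ is a random quantity that typically differs from $r_j$; feasibility is recovered only after averaging, i.e. via convexity of the transportation polytope together with $\E{N_j(\widehat{G})} = r_j$. The supply constraints, by contrast, hold in every realization because a perfect matching assigns each worker exactly one job. No concentration bound or structural property of the optimum is required beyond these two observations and linearity of expectation.
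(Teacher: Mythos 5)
Your proof is correct and follows essentially the same route as the paper's: both define the per-realization assignment variables (your $x_{wj}(\widehat{G})$ are the paper's indicators $I_{wj}$, since a perfect matching gives each worker exactly one job), take expectations to obtain a feasible flow for $TPP(G)$, and conclude by optimality. Your explicit remark that the demand constraint holds only in expectation while the supply constraint holds pathwise is a correct and slightly more careful articulation of the same argument.
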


\begin{proof}
	Assign each edge in $G$ an indicator variable $I_{wj}$, which takes on the value $1$ if $OPT$ assigns worker $w$ to a job of type $j$ in $\widehat{G}$ and $0$ otherwise. We claim that $f_{wj} = \E{ I_{wj}}$ forms a feasible solution to the transportation problem in $G$. Indeed, 
	\begin{equation*}
	\sum_{w \in W}\E{ I_{wj}} = \E{ \sum_{w \in J} I_{wj} } = r_j, \qquad\qquad
	\sum_{j \in J}\E{ I_{wj}} = \E{ \sum_{j \in J} I_{wj} } = 1.
	\end{equation*}
	Since $\E{I_{wj}}$ is feasible for the transportation problem, it must have objective smaller than $TPP(G)$:
	\begin{equation*}
	\E{OPT(\widehat{G})} = \E{ \sum_{[w, j] \in E} u_{wj} I_{wj} } = \sum_{[w,j] \in E} u_{ij} \E{I_{wj}} \leq TPP(G).
	\end{equation*}
\end{proof}


This implies that we can bound the performance of an algorithm with respect to $TPP(G)$. We apply this technique in section \ref{sec:ratio_proof}.

\section{A 1/2-Competitive Algorithm}\label{sec:algorithm}


\subsection{The \algname{} Algorithm}\label{sec:paid}



Before any jobs arrive, \algname{} solves the offline transportation problem $TPP$ on the expectation graph $G$. We find an optimal flow $f^*_{wj}$ from workers to jobs. Throughout the online stage, the algorithm reconstruct this flow between job types and workers as much as possible. For each arriving job, a \emph{preferred worker} $w^P$ is randomly selected with a probability proportional to the optimal flow $f^*$ between the corresponding job type and the worker in the transportation problem. If the preferred worker is no longer available, then the job is assigned to a worker selected randomly from the set of available workers $AW$. We refer to this worker as the \emph{assigned worker} $w^A$. The resulting assignment forms a perfect matching on $\widehat{G}$ since each worker is assigned at most once and each job is assigned to a worker. 

In the context of online bipartite matching, the idea of using an offline solution to guide the online algorithm was used in the ``Suggested Matching'' algorithm~\cite{feldman2009online} and subsequent work, e.g.~\cite{HaeMirZad11}. Our algorithm differs in two ways. First, the offline solution is a transportation problem instead of a maximum weight matching problem. Second, the job is randomly assigned instead of discarded when the preferred worker is no longer available. This random selection ensures that we obtain a perfect matching and is crucial for lemma~\ref{lem:equal_assignment}. The analysis of the competitive performance of \algname{} is also novel except for lemma~\ref{lem:draw_worker}.

The algorithm is formally defined in Algorithm \ref{alg:suggested-matching}. We prove the following result:
\begin{theorem}\label{thm:half_comp}
	\algname{} achieves a competitive ratio of $\frac{1}{2}$ for the online perfect bipartite matching problem with i.i.d.~arrivals. 
\end{theorem}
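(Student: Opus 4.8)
The plan is to lower-bound the expected utility collected by \algname{} by $\tfrac{1}{2}\,TPP(G)$ and then invoke Lemma~\ref{lem:opt_better_expected} to turn this into a bound against $\E{OPT(\widehat{G})}$. Since every utility is non-negative, I would discard the contribution of the random fallback assignments entirely and keep only the utility \algname{} earns whenever the preferred worker happens to be available. The whole theorem then reduces to computing, for each edge $[w,j]$, the expected number of times a type-$j$ job is routed to its preferred worker $w$ while $w$ is still free, and this in turn rests on a single structural fact about the marginal occupancy of the workers.

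That structural fact (the content of Lemma~\ref{lem:equal_assignment}) is that at the start of step $t$ every worker is busy with the same probability $\tfrac{t-1}{n}$, hence available with probability $\tfrac{n-t+1}{n}$. I would prove it by induction on $t$; write $b_w(t)$ for the probability that $w$ is busy at the start of step $t$, so the base case $b_w(1)=0$ is immediate. For the inductive step, fix $w$ and condition on the random set $A$ of available workers and on the type $j$ of the arriving job, which are independent because arrivals are i.i.d. Given $w\in A$, worker $w$ receives the job either as the preferred worker, with probability $f^*_{wj}/r_j$, or through the uniform fallback draw, with probability $\tfrac{1}{|A|}\sum_{w'\notin A} f^*_{w'j}/r_j$ — this fallback bookkeeping is exactly Lemma~\ref{lem:draw_worker}. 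Summing over $j$ with weights $r_j/n$ and using the supply constraint $\sum_j f^*_{wj}=1$ collapses both inner sums to $\tfrac{1}{n}$, so the two contributions combine to $\tfrac{1}{|A|}$. Because exactly $t-1$ workers are busy at step $t$, the count $|A|=n-t+1$ is deterministic, so the per-step assignment probability of $w$ equals $(1-b_w(t))/(n-t+1)$, which the induction hypothesis makes equal to $\tfrac{1}{n}$ for every $w$; adding $\tfrac1n$ to $b_w(t)$ advances the induction.

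With uniform availability established, the main computation is routine. The expected number of type-$j$ jobs that reach their preferred worker $w$ while $w$ is free is
\begin{equation*}
\sum_{t=1}^{n} \frac{r_j}{n}\cdot\frac{f^*_{wj}}{r_j}\cdot\frac{n-t+1}{n}
= \frac{f^*_{wj}}{n^2}\sum_{t=1}^{n}(n-t+1)
= \frac{n+1}{2n}\,f^*_{wj}.
\end{equation*}
Multiplying by $u_{wj}$ and summing over all edges gives $\E{\algname{}(\widehat{G})}\ge \tfrac{n+1}{2n}\,TPP(G)\ge \tfrac12\,TPP(G)$, and Lemma~\ref{lem:opt_better_expected} then yields $\E{\algname{}(\widehat{G})}\ge \tfrac12\,\E{OPT(\widehat{G})}$ for every instance, which is the claimed competitive ratio.

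I expect the only genuine obstacle to be the inductive proof of uniform availability. The subtlety is that the preferred-worker distribution is non-uniform across workers, so it is not obvious that the marginal occupancies stay balanced; the random fallback must compensate exactly. The two facts that force the cancellation are the supply constraint $\sum_j f^*_{wj}=1$, which normalizes each worker's total preferred mass to $\tfrac1n$ per step, and the deterministic count $|A|=n-t+1$, which removes any dependence on the precise distribution over busy sets. Some care is needed to justify the independence of $A$ from the current arrival so that the conditioning and the summation over $j$ interchange cleanly.
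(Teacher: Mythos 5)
Your proposal is correct and follows essentially the same route as the paper: you establish the uniform-availability invariant $\P{w \in AW_t} = \frac{n-t+1}{n}$ (the paper's Lemmas~\ref{lem:draw_worker}--\ref{lem:equal_worker}, which you merge into one induction on the busy probability, conditioning on the available set directly), discard the fallback contributions to get $\P{I_{wj}=1} \ge \frac{n+1}{2n} f^*_{wj}$ exactly as in Lemma~\ref{lem:edge_in_matching}, and close with Lemma~\ref{lem:opt_better_expected}. The only discrepancy is a mislabeling: the fact that every worker is available with probability $\frac{n-t+1}{n}$ is Lemma~\ref{lem:equal_worker}, not Lemma~\ref{lem:equal_assignment}.
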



\begin{algorithm}
	\caption{\algname{}}
	\label{alg:suggested-matching}
	\begin{algorithmic}
		\STATE \textbf{Input: } Expectation graph $G$.
		\STATE \textbf{Output: } Perfect matching $\hat{M}$ on $\widehat{G}$.
		\STATE 
		\STATE \textbf{Initialization: }
		\STATE Solve the transportation problem $TTP$ on $G$ to obtain the optimal flow $f^*$.
		\STATE $\hat{M} \leftarrow \emptyset$
		\STATE $AW \leftarrow W$
		\STATE
		\STATE \textbf{Online stage:}
		\FOR{$t = 1, \dots, n$}
        	\STATE \texttt{\# Job $\hat{j}_t$ arrives with job type $j_t$.}
			\STATE Randomly draw \textbf{preferred} worker $w^P$ with probability $p(w) = \frac{f^*_{wj_t}}{r_{j_t}}$ for $w \in W$.
			\STATE \texttt{\# Use preferred worker ($w^P)$ as assigned worker ($w^A$) if possible.}
			\IF {$w^P \in AW$}
				\STATE $w^A \leftarrow w^P$
			\ELSE
				\STATE Randomly draw $w^A \in AW$ with equal probability.
			\ENDIF
			\STATE $\hat{M} \leftarrow \hat{M} \cup [w^A, \hat{j}_t]$
			\STATE $AW \leftarrow AW \setminus \{w^A\}$
		\ENDFOR
	\end{algorithmic}
\end{algorithm}

\subsection{Example}\label{sec:example}

To illustrate \algname{}, we consider the example shown in Figure \ref{fig:example}. The example has five workers ($n=5$) and three job types ($k=3$).
The expectation graph is shown in Figure \ref{fig:expectation}. Note that the distribution over job types, $\mathcal{D}(J)$, is fully specified by $r_j$. An instance of the realization graph is shown in Figure \ref{fig:realization}. 

Figure \ref{fig:flow} shows $f^*$, the solution to the transportation problem on $G$ that is used by \algname{}. The corresponding objective value is $TPP(G)=8$. Figures \ref{fig:step1} to \ref{fig:blocked} show the arrival of the jobs and the corresponding assignment made by \algname{}. Figure \ref{fig:blocked} illustrates an instance where the preferred worker selected by \algname{} is not available, and a different worker is assigned. For this particular realization $\widehat{G}$, the perfect matching constructed by \algname{} has a total utility $6$, while the optimal perfect matching on $\widehat{G}$ has a total utility $8$. Note that these values are for this particular realization of $\widehat{G}$. The performance guarantee is with respect to the expectation over all realizations of $\widehat{G}$.

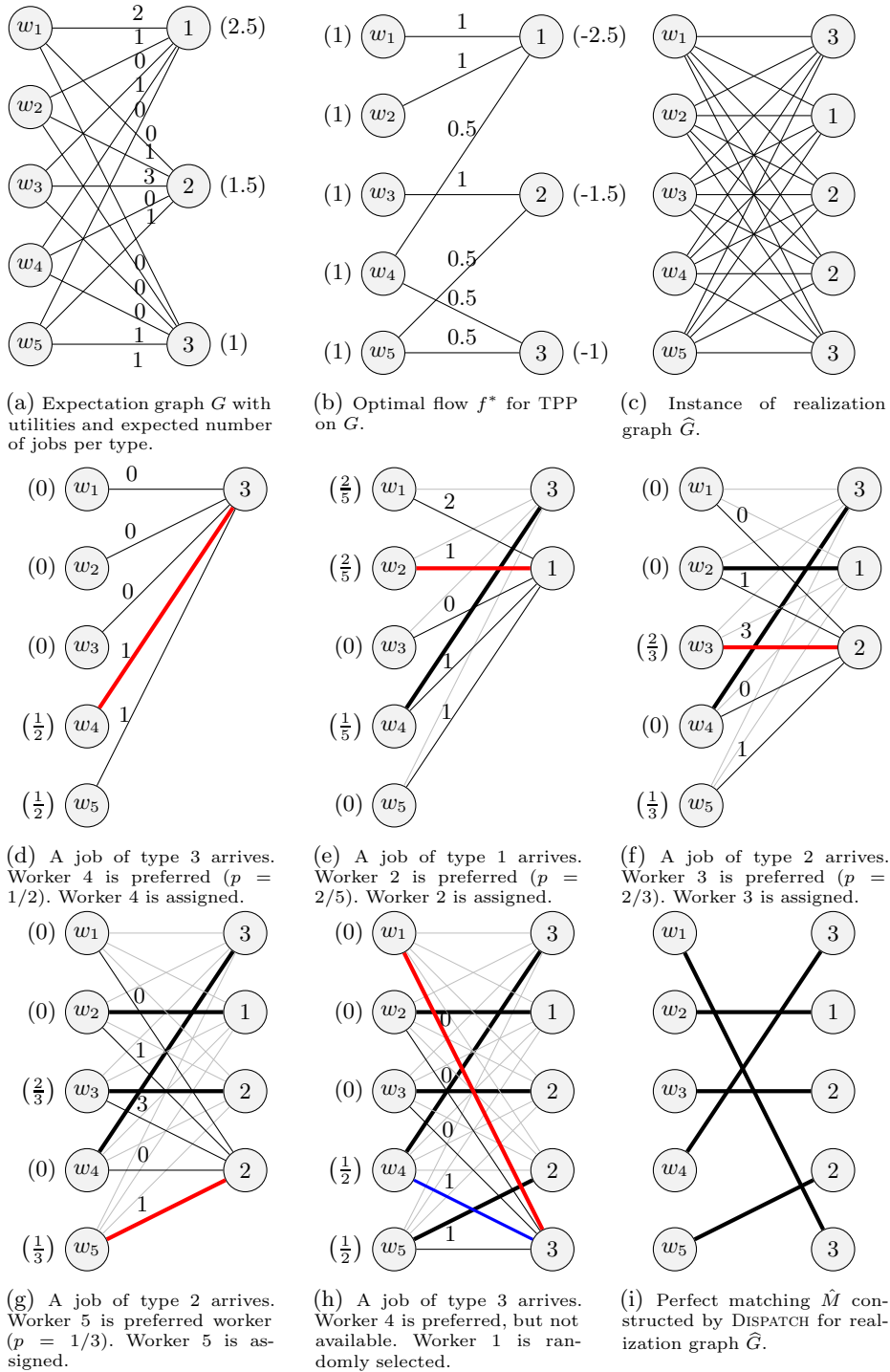
\begin{figure}[p]
	\footnotesize
	\begin{subfigure}[t]{0.3\textwidth}
		\centering
		\begin{tikzpicture}[x=0.55cm, y=0.55cm] 
		\node[node] at(0,6) (worker 1) {$w_1$};
		\node[node] at(0,4) (worker 2) {$w_2$};
		\node[node] at(0,2) (worker 3) {$w_3$};
		\node[node] at(0,0) (worker 4) {$w_4$};
		\node[node] at(0,-2) (worker 5) {$w_5$};
		\node[node] at(4,6) (type 1) {$1$};
		\node[node] at(4,2) (type 2) {$2$};
		\node[node] at(4,-2) (type 3) {$3$};
		
		\draw (type 1.east) node[right] {(2.5)};
		\draw (type 2.east) node[right] {(1.5)};
		\draw (type 3.east) node[right] {(1)};
		
		\draw (worker 1) edge node[pos=0.75,above]{2} (type 1);
		\draw (worker 2) edge node[pos=0.75,above]{1} (type 1);
		\draw (worker 3) edge node[pos=0.74,above]{0} (type 1);
		\draw (worker 4) edge node[pos=0.73,above]{1} (type 1);
		\draw (worker 5) edge node[pos=0.72,above]{0} (type 1);
		
		\draw (worker 1) edge node[pos=0.83,above]{0} (type 2);
		\draw (worker 2) edge node[pos=0.84,above]{1} (type 2);
		\draw (worker 3) edge node[pos=0.85,above, yshift=-0.08cm]{3} (type 2);
		\draw (worker 4) edge node[pos=0.84,yshift=0.1cm]{0} (type 2);
		\draw (worker 5) edge node[pos=0.83,yshift=0.1cm]{1} (type 2);
		
		\draw (worker 1) edge node[pos=0.72,below]{0} (type 3);    
		\draw (worker 2) edge node[pos=0.73,below]{0} (type 3);
		\draw (worker 3) edge node[pos=0.74,below]{0} (type 3);  
		\draw (worker 4) edge node[pos=0.75,below]{1} (type 3);    
		\draw (worker 5) edge node[pos=0.75,below]{1} (type 3);
		
		\end{tikzpicture}
		\caption{\scriptsize Expectation graph $G$ with utilities and expected number of jobs per type.}
		\label{fig:expectation}
	\end{subfigure}
	\hfill
	\begin{subfigure}[t]{0.3\textwidth}
		\centering
		\begin{tikzpicture}[x=0.55cm, y=0.55cm] 
		\node[node] at(0,6) (worker 1) {$w_1$};
		\node[node] at(0,4) (worker 2) {$w_2$};
		\node[node] at(0,2) (worker 3) {$w_3$};
		\node[node] at(0,0) (worker 4) {$w_4$};
		\node[node] at(0,-2) (worker 5) {$w_5$};
		\node[node] at(4,6) (type 1) {$1$};
		\node[node] at(4,2) (type 2) {$2$};
		\node[node] at(4,-2) (type 3) {$3$};
		
		\draw (worker 1.west) node[left] {(1)};
		\draw (worker 2.west) node[left] {(1)};
		\draw (worker 3.west) node[left] {(1)};
		\draw (worker 4.west) node[left] {(1)};
		\draw (worker 5.west) node[left] {(1)};
		\draw (type 1.east) node[right] {(-2.5)};
		\draw (type 2.east) node[right] {(-1.5)};
		\draw (type 3.east) node[right] {(-1)};

		\draw (worker 1) edge node[above]{1} (type 1);
		\draw (worker 2) edge node[above]{1} (type 1);
		\draw (worker 3) edge node[above]{1} (type 2);
		\draw (worker 4) edge node[pos=0.5,above, yshift=0.15cm]{0.5} (type 1);
		\draw (worker 4) edge node[pos=0.5,above] {0.5} (type 3);
		\draw (worker 5) edge node[pos=0.5,above] {0.5} (type 2);
		\draw (worker 5) edge node[above]{0.5} (type 3);
		
		\end{tikzpicture}
		\caption{\scriptsize Optimal flow $f^*$ for \tpp{} on $G$.}
		\label{fig:flow}
	\end{subfigure}
	\hfill
	\begin{subfigure}[t]{0.3\textwidth}
		\centering
		\begin{tikzpicture}[x=0.55cm, y=0.55cm] 
		\node[node] at(0,6) (worker 1) {$w_1$};
		\node[node] at(0,4) (worker 2) {$w_2$};
		\node[node] at(0,2) (worker 3) {$w_3$};
		\node[node] at(0,0) (worker 4) {$w_4$};
		\node[node] at(0,-2) (worker 5) {$w_5$};
		\node[node] at(4,6) (job 1) {$3$};
		\node[node] at(4,4) (job 2) {$1$};
		\node[node] at(4,2) (job 3) {$2$};
		\node[node] at(4,0) (job 4) {$2$};
		\node[node] at(4,-2) (job 5) {$3$};
		
		\draw (worker 1)--(job 1);
		\draw (worker 2)--(job 1);
		\draw (worker 3)--(job 1);
		\draw (worker 4)--(job 1);
		\draw (worker 5)--(job 1);  
		
		\draw (worker 1)--(job 2);
		\draw (worker 2)--(job 2);
		\draw (worker 3)--(job 2);
		\draw (worker 4)--(job 2);
		\draw (worker 5)--(job 2);  
		
		\draw (worker 1)--(job 3);
		\draw (worker 2)--(job 3);
		\draw (worker 3)--(job 3);
		\draw (worker 4)--(job 3);
		\draw (worker 5)--(job 3);  
		
		\draw (worker 1)--(job 4);
		\draw (worker 2)--(job 4);
		\draw (worker 3)--(job 4);
		\draw (worker 4)--(job 4);
		\draw (worker 5)--(job 4);  
		
		\draw (worker 1)--(job 5);
		\draw (worker 2)--(job 5);
		\draw (worker 3)--(job 5);
		\draw (worker 4)--(job 5);
		\draw (worker 5)--(job 5);     
		\end{tikzpicture}
		\caption{\scriptsize Instance of realization graph $\widehat{G}$.}
		\label{fig:realization}
	\end{subfigure}
	
	\begin{subfigure}[t]{0.3\textwidth}
		\centering
		\begin{tikzpicture}[x=0.55cm, y=0.55cm] 
		\node[node] at(0,6) (worker 1) {$w_1$};
		\node[node] at(0,4) (worker 2) {$w_2$};
		\node[node] at(0,2) (worker 3) {$w_3$};
		\node[node] at(0,0) (worker 4) {$w_4$};
		\node[node] at(0,-2) (worker 5) {$w_5$};
		\node[node] at(4,6) (job 1) {$3$};
		
		\draw (worker 1.west) node[left] {$\left(0\right)$};
		\draw (worker 2.west) node[left] {$\left(0\right)$};
		\draw (worker 3.west) node[left] {$\left(0\right)$};
		\draw (worker 4.west) node[left] {$\left(\frac{1}{2}\right)$};
		\draw (worker 5.west) node[left] {$\left(\frac{1}{2}\right)$};

		\draw (worker 1) edge node[above, pos=0.2]{0} (job 1);  
		\draw (worker 2) edge node[above, pos=0.2]{0} (job 1);
		\draw (worker 3) edge node[above, pos=0.2]{0} (job 1);  
		\draw[red, ultra thick] (worker 4) edge node[black, above, pos=0.2]{1} (job 1);    
		\draw (worker 5) edge node[above, pos=0.2]{1} (job 1);    
		\end{tikzpicture}
		\caption{\scriptsize A job of type 3 arrives. Worker 4 is preferred ($p=1/2$). Worker 4 is assigned.}
		\label{fig:step1}
	\end{subfigure}	
	\hfill
	\begin{subfigure}[t]{0.3\textwidth}
		\centering
		\begin{tikzpicture}[x=0.55cm, y=0.55cm] 
		\node[node] at(0,6) (worker 1) {$w_1$};
		\node[node] at(0,4) (worker 2) {$w_2$};
		\node[node] at(0,2) (worker 3) {$w_3$};
		\node[node] at(0,0) (worker 4) {$w_4$};
		\node[node] at(0,-2) (worker 5) {$w_5$};
		\node[node] at(4,6) (job 1) {$3$};
		\node[node] at(4,4) (job 2) {$1$};
		
		\draw (worker 1.west) node[left] {$\left(\frac{2}{5}\right)$};
		\draw (worker 2.west) node[left] {$\left(\frac{2}{5}\right)$};
		\draw (worker 3.west) node[left] {$\left(0\right)$};
		\draw (worker 4.west) node[left] {$\left(\frac{1}{5}\right)$};
		\draw (worker 5.west) node[left] {$\left(0\right)$};
		
		\draw[very thin, gray!50] (worker 1)--(job 1);  
		\draw[very thin, gray!50] (worker 2)--(job 1); 
		\draw[very thin, gray!50](worker 3)--(job 1); 
		\draw[ultra thick,black] (worker 4)--(job 1); 
		\draw[very thin, gray!50] (worker 5)--(job 1);
		
		\draw (worker 1) edge node[above, pos=0.3]{2} (job 2);
		\draw[red, ultra thick] (worker 2) edge node[above, pos=0.3, black]{1} (job 2);
		\draw (worker 3) edge node[above, pos=0.3]{0} (job 2);
		\draw (worker 4) edge node[above, pos=0.3, yshift=-0.05cm]{1} (job 2);
		\draw (worker 5) edge node[above, pos=0.3]{1} (job 2); 
		\end{tikzpicture}
		\caption{\scriptsize A job of type 1 arrives. Worker 2 is preferred ($p=2/5$). Worker 2 is assigned.}
	\end{subfigure}
	\hfill
	\begin{subfigure}[t]{0.3\textwidth}
		\centering
		\begin{tikzpicture}[x=0.55cm, y=0.55cm] 
		\node[node] at(0,6) (worker 1) {$w_1$};
		\node[node] at(0,4) (worker 2) {$w_2$};
		\node[node] at(0,2) (worker 3) {$w_3$};
		\node[node] at(0,0) (worker 4) {$w_4$};
		\node[node] at(0,-2) (worker 5) {$w_5$};
		\node[node] at(4,6) (job 1) {$3$};
		\node[node] at(4,4) (job 2) {$1$};
		\node[node] at(4,2) (job 3) {$2$};

		\draw (worker 1.west) node[left] {$\left(0\right)$};
		\draw (worker 2.west) node[left] {$\left(0\right)$};
		\draw (worker 3.west) node[left] {$\left(\frac{2}{3}\right)$};
		\draw (worker 4.west) node[left] {$\left(0\right)$};
		\draw (worker 5.west) node[left] {$\left(\frac{1}{3}\right)$};
		
		\draw[very thin, gray!50] (worker 1)--(job 1);  
		\draw[very thin, gray!50] (worker 2)--(job 1); 
		\draw[very thin, gray!50](worker 3)--(job 1); 
		\draw[ultra thick,black] (worker 4)--(job 1); 
		\draw[very thin, gray!50] (worker 5)--(job 1);
		
		\draw[very thin, gray!50] (worker 1)--(job 2);  
		\draw[ultra thick,black] (worker 2)--(job 2); 
		\draw[very thin, gray!50] (worker 3)--(job 2); 
		\draw[very thin, gray!50] (worker 4)--(job 2); 
		\draw[very thin, gray!50] (worker 5)--(job 2); 
		
		\draw (worker 1) edge node[above, pos=0.2]{0} (job 3);
		\draw (worker 2) edge node[above, pos=0.2, yshift=-0.08cm]{1} (job 3);
		\draw[ultra thick,red] (worker 3) edge node[above, pos=0.2, black]{3} (job 3);
		\draw (worker 4) edge node[above, pos=0.2]{0} (job 3);
		\draw (worker 5) edge node[above, pos=0.2]{1} (job 3);     
		\end{tikzpicture}
		\caption{\scriptsize A job of type 2 arrives. Worker 3 is preferred ($p=2/3$). Worker 3 is assigned.}
	\end{subfigure}
	
	\begin{subfigure}[t]{0.3\textwidth}
		\centering
		\begin{tikzpicture}[x=0.55cm, y=0.55cm] 
		\node[node] at(0,6) (worker 1) {$w_1$};
		\node[node] at(0,4) (worker 2) {$w_2$};
		\node[node] at(0,2) (worker 3) {$w_3$};
		\node[node] at(0,0) (worker 4) {$w_4$};
		\node[node] at(0,-2) (worker 5) {$w_5$};
		\node[node] at(4,6) (job 1) {$3$};
		\node[node] at(4,4) (job 2) {$1$};
		\node[node] at(4,2) (job 3) {$2$};
		\node[node] at(4,0) (job 4) {$2$};
		
		\draw (worker 1.west) node[left] {$\left(0\right)$};
		\draw (worker 2.west) node[left] {$\left(0\right)$};
		\draw (worker 3.west) node[left] {$\left(\frac{2}{3}\right)$};
		\draw (worker 4.west) node[left] {$\left(0\right)$};
		\draw (worker 5.west) node[left] {$\left(\frac{1}{3}\right)$};
		
		\draw[very thin, gray!50] (worker 1)--(job 1);  
		\draw[very thin, gray!50] (worker 2)--(job 1); 
		\draw[very thin, gray!50](worker 3)--(job 1); 
		\draw[ultra thick,black] (worker 4)--(job 1); 
		\draw[very thin, gray!50] (worker 5)--(job 1);
		
		\draw[very thin, gray!50] (worker 1)--(job 2);  
		\draw[ultra thick,black] (worker 2)--(job 2); 
		\draw[very thin, gray!50] (worker 3)--(job 2); 
		\draw[very thin, gray!50] (worker 4)--(job 2); 
		\draw[very thin, gray!50] (worker 5)--(job 2);
		
		\draw[very thin, gray!50] (worker 1)--(job 3);  
		\draw[very thin, gray!50] (worker 2)--(job 3); 
		\draw[ultra thick] (worker 3)--(job 3); 
		\draw[very thin, gray!50] (worker 4)--(job 3); 
		\draw[very thin, gray!50] (worker 5)--(job 3); 
		
		\draw (worker 1) edge node[above, pos=0.3]{0} (job 4);
		\draw (worker 2) edge node[above, pos=0.3]{1} (job 4);
		\draw (worker 3) edge node[above, pos=0.3]{3} (job 4);
		\draw (worker 4) edge node[above, pos=0.3]{0} (job 4);
		\draw[ultra thick, red] (worker 5) edge node[above, pos=0.3, black]{1} (job 4);     
		\end{tikzpicture}
		\caption{\scriptsize A job of type 2 arrives. Worker 5 is preferred worker ($p=1/3$). Worker 5 is assigned.}
	\end{subfigure}
	\hfill
	\begin{subfigure}[t]{0.3\textwidth}
		\centering
		\begin{tikzpicture}[x=0.55cm, y=0.55cm] 
		\node[node] at(0,6) (worker 1) {$w_1$};
		\node[node] at(0,4) (worker 2) {$w_2$};
		\node[node] at(0,2) (worker 3) {$w_3$};
		\node[node] at(0,0) (worker 4) {$w_4$};
		\node[node] at(0,-2) (worker 5) {$w_5$};
		\node[node] at(4,6) (job 1) {$3$};
		\node[node] at(4,4) (job 2) {$1$};
		\node[node] at(4,2) (job 3) {$2$};
		\node[node] at(4,0) (job 4) {$2$};
		\node[node] at(4,-2) (job 5) {$3$};
		
		\draw (worker 1.west) node[left] {$\left(0\right)$};
		\draw (worker 2.west) node[left] {$\left(0\right)$};
		\draw (worker 3.west) node[left] {$\left(0\right)$};
		\draw (worker 4.west) node[left] {$\left(\frac{1}{2}\right)$};
		\draw (worker 5.west) node[left] {$\left(\frac{1}{2}\right)$};
		
		\draw[very thin, gray!50] (worker 1)--(job 1);  
		\draw[very thin, gray!50] (worker 2)--(job 1); 
		\draw[very thin, gray!50](worker 3)--(job 1); 
		\draw[ultra thick,black] (worker 4)--(job 1); 
		\draw[very thin, gray!50] (worker 5)--(job 1);
		
		\draw[very thin, gray!50] (worker 1)--(job 2);  
		\draw[ultra thick,black] (worker 2)--(job 2); 
		\draw[very thin, gray!50] (worker 3)--(job 2); 
		\draw[very thin, gray!50] (worker 4)--(job 2); 
		\draw[very thin, gray!50] (worker 5)--(job 2);
		
		\draw[very thin, gray!50] (worker 1)--(job 3);  
		\draw[very thin, gray!50] (worker 2)--(job 3); 
		\draw[ultra thick] (worker 3)--(job 3); 
		\draw[very thin, gray!50] (worker 4)--(job 3); 
		\draw[very thin, gray!50] (worker 5)--(job 3); 
		
		\draw[very thin, gray!50] (worker 1)--(job 4);  
		\draw[very thin, gray!50] (worker 2)--(job 4); 
		\draw[ultra thick, black] (worker 5)--(job 4); 
		\draw[very thin, gray!50] (worker 4)--(job 4); 
		\draw[very thin, gray!50] (worker 3)--(job 4);
		
		\draw[ultra thick, red] (worker 1) edge node[above, pos=0.3, black]{0} (job 5);
		\draw (worker 2) edge node[above, pos=0.3]{0} (job 5);
		\draw (worker 3) edge node[above, pos=0.3]{0} (job 5);
		\draw[blue, very thick] (worker 4) edge node[above, pos=0.3, black]{1} (job 5);
		\draw (worker 5) edge node[above, pos=0.3]{1} (job 5);     
		\end{tikzpicture}
		\caption{\scriptsize A job of type 3 arrives. Worker 4 is preferred, but not available. Worker 1 is randomly selected.}
		\label{fig:blocked}
	\end{subfigure}
	\hfill
	\begin{subfigure}[t]{0.3\textwidth}
		\centering
		\begin{tikzpicture}[x=0.55cm, y=0.55cm] 
		\node[node] at(0,6) (worker 1) {$w_1$};
		\node[node] at(0,4) (worker 2) {$w_2$};
		\node[node] at(0,2) (worker 3) {$w_3$};
		\node[node] at(0,0) (worker 4) {$w_4$};
		\node[node] at(0,-2) (worker 5) {$w_5$};
		\node[node] at(4,6) (job 1) {$3$};
		\node[node] at(4,4) (job 2) {$1$};
		\node[node] at(4,2) (job 3) {$2$};
		\node[node] at(4,0) (job 4) {$2$};
		\node[node] at(4,-2) (job 5) {$3$};
		
		\draw[ultra thick, black] (worker 4)--(job 1);
		\draw[ultra thick, black] (worker 2)--(job 2); 
		\draw[ultra thick, black] (worker 3)--(job 3); 
		\draw[ultra thick, black] (worker 5)--(job 4); 
		\draw[ultra thick, black] (worker 1)--(job 5);
		\end{tikzpicture}
		\caption{\scriptsize Perfect matching $\hat{M}$ constructed by \algname{} for realization graph $\widehat{G}$.}
		\label{fig:matching}
	\end{subfigure}
	
	\caption{\small An example of the \algname{} algorithm on the realization graph shown in Figure \ref{fig:realization}. The underlying expectation graph $G$ with $n=5$ and $k=3$ is shown in Figure \ref{fig:expectation}. In Figures \ref{fig:step1} up to \ref{fig:blocked}, the numbers in parenthesis denote the probability of selecting that worker as the preferred worker. Red edges represent the assignment made by the algorithm,  thick black edges are previous assignments, and blue edges mark unavailable preferred workers. Figure \ref{fig:blocked} shows an instance where the preferred worker is busy.}
	\label{fig:example}
\end{figure}

\subsection{Proof of $\frac{1}{2}$-competitiveness}\label{sec:ratio_proof}
To prove that the perfect matching produced by \algname{} has a competitive ratio of a $\frac{1}{2}$, we rely on a key feature of \algname{}: It maintains the invariant, lemma \ref{lem:equal_worker}, that workers are equally likely to be available even though the distribution over job types may not be uniform. To prove this invariant, we first show that both the preferred and the assigned worker are selected uniformly across workers. Recall that the preferred worker may be different than the assigned worker. In fact, the preferred worker does not have to be available and could have been assigned to another job already. Lemma \ref{lem:draw_worker} states this formally for the selection of the preferred worker. The observation underlying this lemma is that each worker is selected with a probability proportional to the total flow $f^*$ originating at the worker, which is equal to one for each worker.

Throughout this section we use additional notation. Let the random variable $W_t^P$ represent the preferred worker for the job arriving at time $t$, and let the random variable $W_t^A$ be the assigned worker. Furthermore, let the random set $AW_t$ consist of the available workers when the job at time $t$ arrives. 
We make no further assumptions on the expectation graph $G$ and/or distribution $\mathcal{D}(J)$ other than those outlined in section \ref{sec:prelims}. Lemmas and theorems in this section are therefore applicable to all problem instances.  

\begin{lemma}\label{lem:draw_worker}
	At each time $t$, the \textbf{preferred} worker $W^P_t$ is drawn uniformly from all workers:
	\begin{equation*}
		\P{W^P_t = w} = \frac{1}{n} \quad \text{for all } w \in W \text{ and } t = 1,\dots, n.
	\end{equation*}
\end{lemma}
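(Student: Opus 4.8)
The plan is to compute the unconditional law of $W^P_t$ by conditioning on the type $j_t$ of the job that arrives at time $t$, and then to exploit the exact cancellation between the arrival probability of a type and the normalizing factor $r_j$ appearing in the selection rule of Algorithm \ref{alg:suggested-matching}. First I would record the single identity on which everything hinges: because arrivals are i.i.d.\ across the $n$ time steps, the marginal distribution of $j_t$ does not depend on $t$, and combining the per-step draw probability with the definition of $r_j$ as the \emph{expected} number of arrivals of type $j$ over the $n$ steps gives $\P{j_t = j} = r_j/n$ for every $j \in J$ and every $t$.

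Next I would condition on the arriving type and substitute the definition of the preferred worker, $\P{W^P_t = w \mid j_t = j} = f^*_{wj}/r_j$. By the law of total probability,
\begin{equation*}
	\P{W^P_t = w} = \sum_{j \in J} \P{j_t = j}\,\P{W^P_t = w \mid j_t = j} = \sum_{j \in J} \frac{r_j}{n}\cdot\frac{f^*_{wj}}{r_j} = \frac{1}{n}\sum_{j \in J} f^*_{wj}.
\end{equation*}
The factors $r_j$ cancel exactly, which is the crux of the argument: although the types are drawn non-uniformly, each type $j$ is drawn with frequency proportional to $r_j$, precisely the quantity by which the selection probabilities are normalized, so the non-uniformity washes out.

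Finally I would invoke the supply constraint of the transportation problem, which forces $\sum_{j \in J} f^*_{wj} = 1$ for every worker $w$ since each worker has supply one. Substituting this into the display yields $\P{W^P_t = w} = 1/n$, independent of both $w$ and $t$, which is exactly the claim.

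I do not expect a genuine obstacle: once the identity $\P{j_t = j} = r_j/n$ is in hand, the lemma follows from one conditioning step. The only point deserving care is justifying that identity cleanly from the i.i.d.\ arrival model and the definition of $r_j$, and observing that it holds for \emph{all} $t$, so that the same computation certifies uniformity at every time step rather than just the first.
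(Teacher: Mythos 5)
Your proposal is correct and follows exactly the same route as the paper's proof: condition on the arriving job type, use $\P{j_t = j} = r_j/n$ and $\P{W^P_t = w \mid j_t = j} = f^*_{wj}/r_j$ so that the $r_j$ factors cancel, and finish with the supply constraint $\sum_{j \in J} f^*_{wj} = 1$. No gaps; nothing further is needed.
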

\begin{proof}
	By conditioning on the job type $j_t$ at stage $t$ and using the law of total probability, we can rewrite the probability of selecting worker $w$ as:
	\begin{align*}
		\P{W^P_t = w} &= \sum_{j \in J} \P{W^P_t = w | j_t = j} \P{j_t = j}.	\intertext{Since the jobs are drawn i.i.d., a job of type $j$ is selected with probability $ \P{j_t = j} = \frac{r_j}{n}$, by definition of $r_j$. Given a job of type $j$, the algorithm selects a worker $w$ as the preferred worker with probability $\P{W^P_t = w | j_t = j} = \frac{f^*_{wj}}{r_j}$. Thus,}
		\P{W^P_t = w} 
		&= \sum_{j \in J} \frac{f^*_{wj}}{r_j} \frac{r_j}{n} 
		= \sum_{j \in J} \frac{f^*_{wj}}{n}.
	\end{align*}
	Finally, recall that every worker supplies a unit of flow in the offline transportation problem, equivalent to the expected number of jobs it serves. The edges adjacent to worker $w$ must thus transport a unit of flow, so $\sum_{j} f^*_{wj} = 1$. Thus, $\P{W^P_t = w} = \frac{1}{n}$.
\end{proof}

Next we show that the assigned worker is selected uniformly at random from the set of available workers. For this lemma to hold, it is crucial that the draw of the assigned worker is done uniformly at random when the preferred worker is not available. Recall that $W_t^A$ is the assigned worker for the job arriving at time $t$ and that $AW_t$ are the available workers before the job arrives.

\begin{lemma}\label{lem:equal_assignment}
At each time step $t$, the \textbf{assigned} worker $W^P_t$ is drawn uniformly from the available workers:
\begin{equation*}
	\P{W^A_t = w |w \in AW_t} = \frac{1}{n-(t-1)}.
\end{equation*}
\end{lemma}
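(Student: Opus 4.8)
The plan is to condition on the entire realized set of available workers $AW_t$ and exploit the fact that the preferred worker is drawn from fresh randomness that is independent of the past. First I would observe that after $t-1$ assignments exactly $t-1$ workers are busy, so the available set always satisfies $|AW_t| = n-(t-1)$ deterministically; write $m := n-(t-1)$. The crucial structural fact I would establish is that $W^P_t$ is independent of $AW_t$: the set $AW_t$ is a function only of the history $j_1,\dots,j_{t-1}$ and the algorithm's past coin flips, whereas $W^P_t$ is generated from the fresh i.i.d.~draw $j_t$ together with an independent coin flip that selects the preferred worker with probability $f^*_{w j_t}/r_{j_t}$. By Lemma~\ref{lem:draw_worker} this gives $\P{W^P_t = w' \mid AW_t = S} = \P{W^P_t = w'} = \tfrac{1}{n}$ for every worker $w'$ and every realizable set $S$.

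Next I would fix a set $S$ with $w \in S$ and $|S| = m$, and decompose the event $\{W^A_t = w\}$ according to the two branches of \algname{}. Worker $w$ becomes the assigned worker either (i) because $w$ is itself the preferred worker, which then lies in $AW_t$ and is taken directly, or (ii) because the preferred worker falls outside $AW_t$ and $w$ is the one selected from the uniform draw over the $m$ available workers. Using the independence above and Lemma~\ref{lem:draw_worker}, case (i) contributes $\tfrac{1}{n}$, while case (ii) contributes $\P{W^P_t \notin S} \cdot \tfrac{1}{m} = \tfrac{n-m}{n}\cdot\tfrac{1}{m}$, since exactly $m$ of the $n$ equally-likely preferred workers lie in $S$. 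Adding the two terms gives
\[
  \P{W^A_t = w \mid AW_t = S} = \frac{1}{n} + \frac{n-m}{nm} = \frac{1}{m},
\]
so the assigned worker is uniform over $S$ for every realizable $S$.

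Finally I would remove the conditioning on the exact set $S$. Because every realizable available set containing $w$ has the same cardinality $m = n-(t-1)$, the conditional value $\tfrac{1}{m}$ does not depend on $S$, and averaging over all such sets via the law of total probability leaves the bound unchanged, yielding $\P{W^A_t = w \mid w \in AW_t} = \tfrac{1}{n-(t-1)}$. I expect the only delicate step to be the independence claim in the first paragraph: one must argue carefully that the preferred-worker draw uses only the current arrival $j_t$ and fresh randomness, so that conditioning on the past (hence on $AW_t$) leaves the distribution of $W^P_t$ untouched and Lemma~\ref{lem:draw_worker} applies verbatim inside the conditioning.
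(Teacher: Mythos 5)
Your proposal is correct and follows essentially the same route as the paper's proof: the same two-branch decomposition (preferred worker available versus preferred worker busy followed by a uniform draw), the same appeal to Lemma~\ref{lem:draw_worker}, and the same independence of the preferred-worker draw from the history. The only difference is cosmetic --- you condition on the full realized set $AW_t = S$ rather than on the event $\{w \in AW_t\}$, which makes the independence step slightly more explicit but yields the identical computation $\frac{1}{n} + \frac{t-1}{n}\cdot\frac{1}{n-(t-1)} = \frac{1}{n-(t-1)}$.
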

\begin{proof}
Assume that $w$ is fixed and that $w \in AW_t$. There are two ways for $w$ to be the assigned worker. Either $w$ is the preferred worker or the preferred worker is not available and $w$ is randomly selected. We express this as:
\begin{align*}
\P{W^A_t = w | w \in AW_t} &= \P{W^P_t = w | w \in AW_t} \\ 
    & \quad + \P{W^A_t = w | W^P_t \notin AW_t, w \in AW_t} \times \\
    & \quad \quad \, \P{W^P_t \notin AW_t | w \in AW_t}
\intertext{The selection of $W^P_t$ is independent of whether $w \in AW_t$. Therefore,}
\P{W^A_t = w | w \in AW_t} &= \P{W^P_t = w } \\
    & \quad + \P{W^A_t = w | W^P_t \notin AW_t, w \in AW_t} \P{W^P_t \notin AW_t}
\intertext{Now we use three observations to complete the proof. First, lemma \ref{lem:draw_worker} implies that $\P{W^P_t = w} = \frac{1}{n}$. Second, since there are $t-1$ busy workers, lemma \ref{lem:draw_worker} implies that $\P{W^P_t \notin AW_t} = \frac{(t-1)}{n}$.
Third, the fact that the assigned worker is drawn uniformly at random when the preferred worker is not available implies that $\P{W^A_t = w | W^P_t \notin AW_t, w \in AW_t} = \frac{1}{n-(t-1)}$. Thus,}
\P{W^A_t = w | w \in AW_t} &= \frac{1}{n} + \frac{1}{n-(t-1)} \frac{(t-1)}{n} = \frac{1}{n-(t-1)}.
\end{align*}
\end{proof}

Lemma \ref{lem:equal_assignment} specifies each available worker is equally likely to be assigned to the next job. As a consequence, we can derive the probability that a worker is still available after $t-1$ jobs have arrived:

\begin{lemma}\label{lem:equal_worker}
	\algname{} maintains the following invariant throughout the online stage:
	\begin{equation*}
		\P{w \in AW_t} = \frac{n-(t-1)}{n} \quad \text{for all } w \in W \text{ and } t = 1,\dots, n.
	\end{equation*}
\end{lemma}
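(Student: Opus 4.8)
The plan is to prove this by induction on $t$, using Lemma~\ref{lem:equal_assignment} as the engine that drives each step. The base case $t=1$ is immediate: before any job arrives, every worker is available, so $\P{w \in AW_1} = 1 = \frac{n-(1-1)}{n}$. For the inductive step, I would assume the invariant holds at time $t$, namely $\P{w \in AW_t} = \frac{n-(t-1)}{n}$, and aim to establish $\P{w \in AW_{t+1}} = \frac{n-t}{n}$.

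The key observation linking consecutive time steps is that a worker $w$ remains available at time $t+1$ exactly when it was available at time $t$ \emph{and} was not the worker assigned to the job arriving at time $t$. I would therefore write
\begin{equation*}
	\P{w \in AW_{t+1}} = \P{w \in AW_t} \cdot \P{W^A_t \ne w \mid w \in AW_t}.
\end{equation*}
Lemma~\ref{lem:equal_assignment} tells us that $\P{W^A_t = w \mid w \in AW_t} = \frac{1}{n-(t-1)}$, so the complementary probability that $w$ is \emph{not} assigned is $1 - \frac{1}{n-(t-1)} = \frac{n-t}{n-(t-1)}$. Substituting the inductive hypothesis for the first factor, the product telescopes:
\begin{equation*}
	\P{w \in AW_{t+1}} = \frac{n-(t-1)}{n} \cdot \frac{n-t}{n-(t-1)} = \frac{n-t}{n},
\end{equation*}
which is precisely the claimed value at time $t+1$. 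This completes the induction.

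The main subtlety to get right is the conditioning structure in the factorization above, rather than any hard calculation. I would want to justify carefully that $\P{w \in AW_{t+1}}$ factors as the probability that $w$ survived to time $t$ times the conditional probability that it is not assigned at time $t$; this is simply the chain rule applied to the event $\{w \in AW_{t+1}\} = \{w \in AW_t\} \cap \{W^A_t \ne w\}$, noting that the latter event is contained in the former. Once this factorization is stated, Lemma~\ref{lem:equal_assignment} supplies the conditional probability directly and the arithmetic is routine. The elegance here is that the nonuniformity of the job-type distribution $\mathcal{D}(J)$ never enters: it was already absorbed into Lemmas~\ref{lem:draw_worker} and~\ref{lem:equal_assignment}, so the survival probability is identical across all workers and matches the uniform benchmark one would obtain if each worker were equally likely to be picked at every step.
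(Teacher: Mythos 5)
Your proof is correct and follows essentially the same route as the paper: the paper computes the survival probability $\P{w \in AW_{t+1} \mid w \in AW_t} = \frac{n-t}{n-(t-1)}$ from Lemma~\ref{lem:equal_assignment} and unrolls it as a telescoping product, which is exactly your induction written out explicitly. Your added care about the conditioning structure (the event $\{w \in AW_{t+1}\}$ being contained in $\{w \in AW_t\}$) is a welcome clarification but not a departure.
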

\begin{proof}
At every time step, a worker is chosen randomly from the remaining available workers, as shown in lemma \ref{lem:equal_assignment}. The probability that an available worker in time step $t$ is still available in time step $t+1$ is:
\begin{align*}
	\P{w \in AW_{t+1} | w \in AW_t} 
	    &= 1 - \P{W^A_t = w | w \in AW_t} \\ 
	    &= 1 - \frac{1}{n-(t-1)} = \frac{n-t}{n-(t-1)}.
\end{align*}
Thus, the probability of being available for the $t$\textsuperscript{th} job is equal to: 
\begin{align*}
	\P{w \in AW_t} 
		&= \prod_{i=1}^t{\P{w \in AW_{t} | w \in AW_{t-1}}}\\
		&= \frac{n- (t-1)}{n - (t-2)} \frac{n- (t-2)}{n - (t-3)} \dots \frac{n-1}{n} = \frac{n-(t-1)}{n}.
\end{align*}
\end{proof}

From lemma \ref{lem:equal_worker}, we know the probability that a worker is available at each time step. We use this to bound the probability that a worker $w$ is assigned to a job with job type $j$ by \algname{}. We use the indicator random variable $I_{wj}$. $I_{wj} = 1$ when the \algname{} assigns worker $w$ to a job with job type $j$, and $I_{wj} = 0$ otherwise. We bound the probability with respect to $f^*_{wj}$ in $TPP(G)$. By bounding the algorithm's performance with respect to $TPP(G)$ we can bound the competitive ratio of \algname{}. See section \ref{sec:opt_bound} for more details.

\begin{lemma}\label{lem:edge_in_matching}
	Given a perfect matching $\hat{M}$ constructed by \algname{}, the probability that worker $w$ is assigned to a job of type $j$ is bounded by:
	\begin{equation*}
		\P{I_{wj} = 1} \ge \frac{1}{2} f^*_{wj}.
	\end{equation*}
\end{lemma}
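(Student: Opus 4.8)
The plan is to lower-bound $\P{I_{wj}=1}$ by restricting attention to the single mechanism through which \algname{} assigns jobs by design: worker $w$ receives a type-$j$ job at some time $t$ because it is drawn as the \emph{preferred} worker for that job and happens to be available. I would discard the additional contribution coming from the random fallback assignment (when some other worker's preferred choice is busy and $w$ is picked uniformly from $AW_t$), since including only the preferred-worker pathway can only decrease the probability and therefore still yields a valid lower bound.

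Concretely, for a fixed time $t$ I would consider the event $\{j_t = j,\ W^P_t = w,\ w \in AW_t\}$. The first two conditions concern only the fresh randomness at step $t$: by i.i.d.~arrivals $\P{j_t = j} = r_j/n$, and given the type, the preferred worker is drawn with probability $f^*_{wj}/r_j$. The third condition, availability, is a function only of the arrivals and random choices in steps $1,\dots,t-1$. The key observation is that these are independent, so the joint probability factors as $\frac{r_j}{n}\cdot\frac{f^*_{wj}}{r_j}\cdot\P{w \in AW_t}$, and Lemma~\ref{lem:equal_worker} supplies $\P{w \in AW_t} = \frac{n-(t-1)}{n}$. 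This gives a per-step contribution of $\frac{f^*_{wj}}{n}\cdot\frac{n-(t-1)}{n}$.

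Because each worker is assigned exactly once, the events ``$w$ is assigned a type-$j$ job at time $t$ as its preferred worker'' are mutually exclusive across $t$, so I can sum the per-step bounds over $t = 1,\dots,n$. Evaluating the arithmetic series $\sum_{t=1}^n \left(n-(t-1)\right) = \frac{n(n+1)}{2}$ then yields $\P{I_{wj}=1} \ge \frac{f^*_{wj}}{n^2}\cdot\frac{n(n+1)}{2} = \frac{f^*_{wj}(n+1)}{2n} \ge \frac{1}{2}f^*_{wj}$, where the final inequality uses $\frac{n+1}{2n}\ge\frac12$.

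I expect the main subtlety to be the independence claim between the step-$t$ draws $\{j_t=j,\ W^P_t=w\}$ and the availability event $\{w\in AW_t\}$; establishing it requires both the i.i.d.~arrival assumption and the fact that the preferred worker is a fresh draw depending only on $j_t$ (and hence independent of the history that determines $AW_t$). A secondary point worth stating carefully is why restricting to the preferred-worker pathway — ignoring the uniform fallback — is legitimate as a lower bound, and why the per-step events are mutually exclusive, so that a plain summation rather than an inclusion–exclusion argument suffices.
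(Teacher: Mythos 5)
Your proposal is correct and follows essentially the same route as the paper's proof: lower-bounding $\P{I_{wj}=1}$ by summing, over $t$, the probability of the mutually exclusive events that $w$ is available and selected as the preferred worker for a type-$j$ arrival, factoring that probability via the independence of the fresh step-$t$ randomness from the availability history, invoking Lemma~\ref{lem:equal_worker}, and evaluating the arithmetic series to get $\frac{n+1}{2n}f^*_{wj} \ge \frac{1}{2}f^*_{wj}$. The subtleties you flag (the independence claim and the legitimacy of dropping the fallback pathway) are exactly the points the paper's proof relies on as well.
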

\begin{proof}
	If $I_{wj} = 1$, then worker $w$ must have been assigned to a job of type $j$ in one of the time steps. Thus, $I_{wj} = \sum_{t=1}^n I^t_{wj}$ where $I^t_{wj}$ is indicator for whether worker $w$ is assigned to a job of type $j$ at time step $t$:
	\begin{equation*}
		\P{I_{wj} = 1} = \sum_{t=1}^n \P{I^t_{wj} = 1}.
	\end{equation*}
	Let us bound the probability $\P{I^t_{wj} = 1}$ for all $t=1, \dots, n$. First, we condition on the job type arriving at time $t$. Note that $j_t$ must equal $j$:
	\begin{equation*}
		\P{I^t_{wj} = 1} = \P{I^t_{wj} = 1 | j_t = j} \P{j_t = j}.
    \end{equation*}
	Recall that there are two ways for worker $w$ to be assigned after a job of type $j$ arrives. Either $w$ is the preferred worker and is assigned the job, or another worker $w'$ is selected as the preferred worker but is not available. $w$ is then selected as the assigned worker. We lower bound the probability that worker $w$ is assigned for the job of type $j$ by considering only the case where $w$ is the preferred worker.
    \begin{align*}
		\P{I^t_{wj} = 1}
			&\ge \P{w \in AW_{t}, W^P_t = w | j_t = j} \P{j_t = j}\\
			&= \P{w \in AW_t} \P{W^P_t = w| j_t = j}  \P{j_t = j}\\
			&= \frac{n-(t-1)}{n} \frac{f^*_{wj}}{r_j} \frac{r_j}{n}\\
			&= \frac{1}{n}\frac{n-(t-1)}{n} f^*_{wj}.
	\end{align*}
	For the first equality, we use that the job type at time $t$ and the selection of the preferred worker are independent from whether $w$ is available at time $t$.
	The second equality follows from lemma \ref{lem:equal_worker}, the weighted random selection of the preferred worker, and the job arrival process.
	
	We use $\P{I^t_{wj} = 1} = \frac{1}{n}\frac{n-(t-1)}{n} f^*_{wj}$ to bound the total probability of assigning worker $w$ for a job of type $j$:
	$$
	\P{I_{wj} = 1}
		= \sum_{t=1}^n \P{I^t_{wj} = 1} \ge \sum_{t=1}^n \frac{1}{n}\frac{n-(t-1)}{n} f^*_{wj}
		= \frac{1}{2} \frac{n+1}{n}f^*_{wj}
		\ge \frac{1}{2}f^*_{wj}.
	$$
\end{proof}

Lemma \ref{lem:edge_in_matching} bounds the probability that worker $w$ is matched to a job of type $j$. By linearity of expectation, Theorem \ref{thm:half_comp} and the $\frac{1}{2}$ competitive ratio follow almost immediately from lemma \ref{lem:edge_in_matching}.

\begin{proof}[Proof of Theorem \ref{thm:half_comp}]
	The expected utility returned by the algorithm is a weighted sum of indicators whether worker $w$ is assigned to a job of type $j$. Note that each worker is assigned to at most one job (type). We can then apply lemma \ref{lem:edge_in_matching} to bound the probability $P(I_{wj} = 1)$ and the expected utility of the algorithm:
	\begin{align*}
		\E{\algname{}(\widehat{G})}
			&= \E{ \sum_{w \in W, j \in J} u_{wj} I_{wj} } \\
			&= \sum_{w \in W, j \in J} u_{wj} \E{I_{wj}} \\
			&= \sum_{w \in W, j \in J} u_{wj} \P{I_{wj} = 1} \\
		&\ge  \frac{1}{2} \sum_{w \in W, j \in J}  u_{wj} f^*_{wj}= \frac{1}{2} TPP(G).
	\end{align*}
	Note that the inequality requires that the utility weights are non-negative.  
	
	Finally, we apply Lemma \ref{lem:opt_better_expected} to obtain a bound on the competitive ratio attained by \algname{} for any expectation graph $G$ and distribution $\mathcal{D}(J)$:
	\begin{equation*}
		\E{\algname{}(\widehat{G})} \ge \frac{1}{2} TPP(G) \ge \frac{1}{2} \E{OPT(\widehat{G})}
	\end{equation*}
\end{proof}

\section{Best-Possible Competitive Ratio}\label{sec:bound}
We present here a family of instances for which any online algorithm attains a competitive ratio of at most $\frac{1}{2}$. The \algname{} algorithm guarantees a competitive ratio of $\frac{1}{2}$ and is thus optimal with respect to competitive ratio.

\begin{theorem}\label{thm:bounds}
	For the online perfect bipartite matching problem with an i.i.d.~arrival process, no online algorithm can achieve a competitive ratio $\frac{\E{ALG(\widehat{G})}}{\E{OPT(\widehat{G})}}$ better than $\frac{1}{2}$.
\end{theorem}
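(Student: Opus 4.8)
The plan is to exhibit a parametric family of instances on which \emph{every} online algorithm loses a factor approaching $2$ against $OPT$, so that the worst-case ratio $\inf_{G,\mathcal{D}}\E{ALG(\widehat{G})}/\E{OPT(\widehat{G})}$ cannot exceed $\tfrac12$ for any algorithm. The construction has $n$ workers $w_1,\dots,w_n$ and $n+1$ job types: a worthless \emph{filler} type $U$ with $u_{w,U}=0$ for every worker $w$, and $n$ \emph{picky} types $P_1,\dots,P_n$ with $u_{w_i,P_i}=V$ and $u_{w,P_i}=0$ for $w\neq w_i$. I set the demands to $r_U=n-\epsilon$ and $r_{P_i}=\epsilon/n$, so that $\sum_j r_j=n$ and each arrival is a picky job with small total probability $\epsilon/n$ per step; I will then send $n\to\infty$ and $\epsilon\to 0$. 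The role of the filler is essential: because we must return a \emph{perfect} matching, the algorithm is forced to occupy one worker for every arriving job even though fillers are worthless. In the imperfect-matching setting it could simply discard all fillers and reserve every worker, capturing a picky job with probability one; removing exactly this freedom is what the perfect-matching requirement does.

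First I would lower bound $OPT$. Since the offline optimum may reserve worker $w_i$ for a picky job of type $P_i$ and place fillers elsewhere, it collects $V$ whenever at least one picky job arrives, so $\E{OPT(\widehat{G})}\ge V\cdot\P{\text{at least one picky job arrives}}$.

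Next, and this is the heart of the argument, I would upper bound $\E{ALG(\widehat{G})}$ for an arbitrary (possibly randomized) online algorithm by analyzing the dominant event that exactly one picky job arrives. Conditioned on this event, its arrival position $T$ is uniform on $\{1,\dots,n\}$ by exchangeability of the i.i.d.~arrivals, and every other arrival is a filler. The key observation is that the index $i$ of the realized picky type is uniform and \emph{independent} of everything the algorithm has observed before time $T$: the first $T-1$ arrivals are all fillers, which are indistinguishable across workers and therefore carry no information about $i$. Because the matching must be perfect, exactly $T-1$ workers are already occupied when the picky job arrives, and the occupied set $S_{T-1}$ is independent of $i$; hence $\P{w_i\notin S_{T-1}}=(n-(T-1))/n$, and the algorithm can collect $V$ only if $w_i$ is still free. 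Averaging over the uniform position gives a per-event capture probability of at most $\frac1n\sum_{t=1}^{n}\frac{n-t+1}{n}=\frac{n+1}{2n}$, so $\E{ALG(\widehat{G})}\le V\cdot\P{\text{exactly one picky}}\cdot\frac{n+1}{2n}+O(V\epsilon^2)$, where the last term bounds the contribution of the $O(\epsilon^2)$-probability event that two or more picky jobs arrive.

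Finally I would combine the two bounds: dividing, and using $\P{\text{at least one picky}}\ge\P{\text{exactly one picky}}=\Theta(\epsilon)$, yields $\E{ALG(\widehat{G})}/\E{OPT(\widehat{G})}\le\frac{n+1}{2n}+O(\epsilon)$, which tends to $\frac12$ as $n\to\infty$ and $\epsilon\to 0$. I expect the main obstacle to be making the independence step fully rigorous for \emph{arbitrary} adaptive randomized algorithms: precisely, arguing that the occupied set $S_{T-1}$ is independent of the picky index $i$ and that the perfect-matching constraint forces $|S_{T-1}|=T-1$, since this is exactly what rules out any clever reservation strategy beating $\frac12$. The bookkeeping for the lower-order multiple-picky terms, and the choice of whether to drive the gap with a large $V$ or a small $\epsilon$, is routine by comparison.
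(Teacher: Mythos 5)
Your proposal is correct and is essentially the paper's own argument: the same hard family (a worthless filler type plus $n$ ``diagonal'' picky types whose total arrival probability is driven to zero), the same key observation that at step $t$ exactly $t-1$ workers are occupied independently of the fresh i.i.d.\ draw, and the same resulting capture bound $\frac{1}{n}\sum_{t=1}^{n}\frac{n-t+1}{n}=\frac{n+1}{2n}\to\frac12$. The only cosmetic difference is that you condition on the event of exactly one picky arrival (absorbing multi-arrival events into an $O(\epsilon^2)$ term), whereas the paper bounds the expected marginal gain at each time step directly and sums, which avoids that bookkeeping.
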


\begin{proof}
	
	Consider an instance $G$ with the number of job types $k=n+1$. Let the job types be indexed from $1$ to $n+1$ and the workers from $1$ to $n$. Job types $1$ to $n$ each arrive with probability $p/n$ and job type $n+1$ arrives with probability $1-p$. For this graph, we set $u_{wj}=1$ if $w = j$ and to 0 otherwise. This implies $u_{w,n+1} = 0$ for all $w \in W$.
	
	Note that $OPT$ gains a utility of one per unique job type in $\{1, \ldots, n\}$ that arrives. The expected number of unique job types is computed by considering each job type as a geometric random variable with a success probability of $\frac{p}{n}$. Thus, $\E{OPT(\widehat{G})} = n \left(1 - \left(1 - \frac{p}{n}\right)^n\right)$.
		
	For any online algorithm $\text{ALG}^*$, $t-1$ workers are no longer available at time step $t$ regardless of the strategy. Thus, with probability $(1-p) + p \frac{t-1}{n}$ the increase in utility is zero. Thus, the total expected utility increases by at most $p \frac{n - (t-1)}{n}$ in time step $t$. The total expected utility obtained by $ALG^*$ is then:
	\begin{equation*}
		\E{ALG^*(\widehat{G})} \le p \frac{n}{n} + p \frac{n-1}{n} + p \frac{n-2}{n} + \dots + p \frac{1}{n} = \frac{1}{2} p (n+1)
	\end{equation*}
	We compute the relevant ratio and then take the limit as $n$ goes to infinity:
	\begin{equation*}
		\lim_{n \rightarrow \infty} \frac{\E{ALG^*(\widehat{G})}}{\E{OPT(\widehat{G})}} 
        = \lim_{n \rightarrow \infty} \frac{\frac{1}{2} p (n+1)}{n \left(1 - \left(1 - \frac{p}{n}\right)^n\right)} 
        = \frac{1/2 \cdot p}{1 - e^{-p}}
	\end{equation*}
    Since $p$ can take on any value in the interval $(0, 1)$, we consider the limit as $p$ goes to zero:
    \begin{equation*}
    	\lim_{p \rightarrow 0^+} \frac{1/2 \cdot p}{1 - e^{-p}}
        = \lim_{p \rightarrow 0^+} \frac{1/2}{e^{-p}}
        = \frac{1}{2}.
    \end{equation*}
\end{proof}

\begin{corollary}
	\algname{} achieves the best-possible competitive ratio of $\frac{1}{2}$ for the Online Perfect Bipartite Matching problem.
\end{corollary}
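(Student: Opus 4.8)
The plan is to exhibit a parameterized family of instances on which the ratio $\frac{\E{ALG(\widehat{G})}}{\E{OPT(\widehat{G})}}$ can be pushed arbitrarily close to $\frac{1}{2}$ from above, simultaneously for \emph{every} online algorithm. First I would set up the instance: take $n$ workers and $n+1$ job types, declare types $1,\dots,n$ to be ``good'' with $u_{wj}=1$ exactly when $w=j$, and make type $n+1$ a worthless dummy ($u_{w,n+1}=0$ for all $w$). The arrival distribution assigns each good type probability $p/n$ and the dummy type probability $1-p$, with $p\in(0,1)$ a tuning parameter. The point of this design is that each worker has exactly one profitable partner, while the dummy type dilutes the stream of useful arrivals and forces the matching to be completed without adding value.

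Next I would compute $\E{OPT(\widehat{G})}$ in closed form. Since the only positive-utility edges form a perfect matching between good types and their matching workers, an omniscient offline algorithm earns exactly one unit for each distinct good type that appears at least once among the $n$ arrivals, and nothing else. By linearity of expectation over the $n$ good types, together with the observation that type $j$ fails to appear in all $n$ independent draws with probability $(1-p/n)^n$, this gives $\E{OPT(\widehat{G})} = n\bigl(1-(1-p/n)^n\bigr)$.

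The heart of the argument is an upper bound on $\E{ALG(\widehat{G})}$ that must hold for every online algorithm. The key structural fact is that, because the construction forces a perfect matching, exactly $t-1$ workers are busy when the $t$-th job arrives; this count is deterministic, independent of the algorithm's choices. Moreover, i.i.d.~arrivals make the $t$-th job's type independent of the history, hence of the (algorithm-determined) busy set $B_t$. Conditioning on any realization of $B_t$, the arriving job contributes positive utility only if it is a good type whose unique partner is still free, an event of probability $p\,\frac{n-(t-1)}{n}$, and even then the gain is at most $1$. Thus the expected per-step gain is at most $p\,\frac{n-(t-1)}{n}$, and summing this arithmetic series over $t=1,\dots,n$ yields $\E{ALG(\widehat{G})}\le \tfrac{1}{2}p(n+1)$. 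I expect this independence-plus-counting step to be the main obstacle: one must argue that the per-step bound holds uniformly over the random, algorithm-dependent busy set, which is precisely where the i.i.d.~assumption and the deterministic identity $|B_t|=t-1$ are used.

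Finally I would close with a double limit. Dividing the two bounds and letting $n\to\infty$ replaces $(1-p/n)^n$ by $e^{-p}$ and $\frac{n+1}{n}$ by $1$, leaving the ratio bounded by $\frac{p/2}{1-e^{-p}}$. Letting $p\to 0^+$ and expanding $1-e^{-p}=p-p^2/2+\cdots$ (or applying L'H\^opital) drives this bound to $\frac{1}{2}$. Since the bound holds for an arbitrary $ALG$ and the instances are legitimate members of the problem family, no online algorithm can guarantee a competitive ratio exceeding $\frac{1}{2}$; combined with Theorem~\ref{thm:half_comp}, this pins the optimal competitive ratio at exactly $\frac{1}{2}$.
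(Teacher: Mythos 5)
Your proposal is correct and follows essentially the same route as the paper's proof of Theorem~\ref{thm:bounds}: the same diagonal instance with a worthless dummy type, the same closed form $\E{OPT(\widehat{G})} = n\bigl(1-(1-p/n)^n\bigr)$, the same per-step bound $p\,\frac{n-(t-1)}{n}$ summing to $\frac{1}{2}p(n+1)$, and the same double limit $n\to\infty$, $p\to 0^+$, combined with Theorem~\ref{thm:half_comp}. Your explicit justification that the i.i.d.~arrival is independent of the algorithm-dependent busy set $B_t$ with $|B_t|=t-1$ is a welcome sharpening of a step the paper states only briefly, but it is not a different argument.
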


\section{Conclusion}\label{sec:conclusion}

In this paper, we examine the problem of online perfect bipartite matching with i.i.d.~arrivals from a known distribution. We present the \algname{} algorithm. It attains a competitive ratio of $\frac{1}{2}$. We show that this is the best possible. Thus, the algorithm \algname{} is optimal in terms of competitive ratio. 

There is an intriguing difference between online perfect bipartite matching  algorithms for minimization and the \algname{} algorithm for maximization. Whereas the competitive ratio for minimization is bounded logarithmically, a constant bound was obtained for maximization with i.i.d.~arrivals. This raises the question of whether a constant competitive ratio is possible for minimization with i.i.d.~arrivals. 

It may be possible to translate the analysis in this work to other contexts. Our analysis relied on two key ideas; the use of the expectation graph and proving that, regardless of how the jobs arrive, the \algname{} algorithm effectively translates the non-uniform sampling over jobs to a uniform sampling over workers. 

\emergencystretch=1em 

\bibliographystyle{splncs04}
\bibliography{references}

\end{document}